\newcommand{\N}{\mathbb{N}}
\begin{document}
\title{Online Exploration of Rectangular Grids}

\author{Hans-Joachim B\"{o}ckenhauer\inst{1} \and Janosch Fuchs\inst{2} \and
  Ulla Karhum\"{a}ki\inst{3}\fnmsep\thanks{Part of this work has been done
    while the third author was visiting ETH Z\"{u}rich.} \and Walter
  Unger\inst{2}}
\institute{Department of Computer Science, ETH
  Z\"{u}rich, Switzerland\\\email{hjb@inf.ethz.ch} \and
  Department of Computer Science, RWTH Aachen University,
  Germany\\\email{fuchs@cs.rwth-aachen.de, walter.unger@cs.rwth-aachen.de} \and
  Department of Mathematics, University of Manchester, United
  Kingdom\\\email{ulla.karhumaki@postgrad.manchester.ac.uk}}
\authorrunning{H-J.~B\"{o}ckenhauer, J.~Fuchs, U.~Karhum\"{a}ki, W.~Unger}

%----------------------------------------------------------------
\maketitle

\begin{abstract}
In this paper, we consider the problem of exploring unknown environments with autonomous agents. We model the environment as a graph with edge weights and analyze the task of visiting all vertices of the graph at least once. The hardness of this task heavily depends on the knowledge and the capabilities of the agent. In our model, the agent sees the whole graph in advance, but does not know the weights of the edges. As soon as it arrives in a vertex, it can see the weights of all the outgoing edges. We consider the special case of two different edge weights $1$ and $k$ and prove that the problem remains hard even in this case. We prove a lower bound of $11/9$ on the competitive ratio of any deterministic strategy for exploring a ladder graph and complement this result by a $4$-competitive algorithm. All of these results hold for undirected graphs. Exploring directed graphs, where the direction of the edges is not known beforehand, seems to be much harder. Here, we prove that a natural greedy strategy has a linear lower bound on the competitive ratio both in ladders and square grids.

\end{abstract}

\section{Introduction}

Making an autonomous agent explore some unknown environment is a well-studied and important task \cite{KP, PK,FT, monta, MMS}. Usually, the environment is modeled by a graph, and the agent works in discrete time steps, moving from one vertex to a neighboring one in one time unit. The goal is to minimize the number of time steps the agent needs to accomplish some task, e.\,g., visiting all the vertices at least once or finding some special destination vertex. Many different models have been considered regarding the information the agent has about the graph in the beginning or learns with its visit to a newly visited vertex, for an overview, we point the reader to the surveys of Ghosh and Klein \cite{GK} and Berman \cite{B}. The problem of exploring an unweighted graph by visiting all its edges was introduced by Deng and Papadimitriou \cite{DP}. Kalyanasundaram and Pruhs presented a slight adaptation of this scenario, in which all vertices of a weighted graph have to be visited \cite{KP}. In \cite{PK}, the authors considered online routing algorithms for finding paths between the vertices of plane graphs.

In this paper, we are using the following model. The graph is weighted and either directed or undirected. At the beginning, the agent sees the whole undirected graph but only the weights of the edges incident to the starting vertex $s_0$, and the directions of the edges going out from $s_0$. However, as soon as a vertex $v$ is visited, the agent learns the weights of all the edges incident to $v$ and the directions of the edges going out from $v$. The agent has unlimited memory and can hence record all information it has gained about the graph. Although this model seems to give quite a lot of power to the agent, one can easily see that the resulting problem still is very hard and no deterministic agent can reach a satisfactory solution in arbitrary graphs. Thus, we restrict our attention to rectangular grids, especially ladders and square grids. We are interested in parameterizing the problem with respect to the admissible set of edge weights. As a first step, in this paper, we restrict the edge weights to two different values $1$ and $k$, for some arbitrary natural number $k\ge 5$. It turns out that, even in this very restricted scenario, no deterministic strategy for the agent can guarantee to achieve an optimal or asymptotically optimal solution.

We prove that, for exploring undirected weighted ladder graphs, that is, grids of dimension $2\times n$, there exists an algorithm \textsc{alg} with competitive ratio $4$. We complement this result with a lower bound showing that no deterministic algorithm can reach a better competitive ratio than $\frac{11}{9}$. After that, we show that, for directed weighted $2 \times n$ grids, the problem becomes much harder, by showing a sample graph for which the competitive ratio of any deterministic algorithm tends to $\frac{n}{a}$, where $a$ is a constant. We also extend this result to $n\times n$ grids.

\section{Preliminaries}

Our exploration results in this paper are for different special case of \emph{grids}. Below we are giving formal definitions of \emph{graphs}, \emph{grids} and \emph{paths}.

\begin{definition}[Graph] An \emph{undirected graph} is a pair $G=(V,E)$, where $V$ is a finite set of \emph{vertices} and $E\subseteq \lbrace \{ x,y\} \mid x \neq y$ and$ \ x,y \in V \rbrace$ is the set of \emph{edges}. A \emph{directed graph} is a pair $G=(V,E)$, where $V$ is a finite set of vertices and $E\subseteq V \times V$ is a set of directed edges (also called \emph{arcs}). We only consider directed graphs without loops, i.\,e., without edges of the form $(v,v)$ for $v \in V$. In an \emph{edge-weighted graph}, a function $c: E\rightarrow \N$ assigns a weight to every edge. Note that, in a directed, edge-weighted graph, the edges $(v,w)$ and $(w,v)$ may have different weights.

\end{definition}

\begin{definition}[Grid] For $m,n \geq 1$, the \emph{undirected $m\times n$ grid} is the graph $G_{m,n}=(V,E),$ where $V=\lbrace (i,j) \mid 1\leq i \leq m, 1\leq j \leq n \rbrace$ and $E =\lbrace (i,j),(i+1,j) \mid 1\leq i \leq m-1, 1\leq j \leq n \rbrace \cup \lbrace (i,j),(i,j+1) \mid 1\leq i \leq m, 1\leq j \leq n-1 \rbrace $.

In this paper, we are mostly considering grids forming a ladder or a square. That is, $m=2$ or $m=n$, respectively.

\end{definition}

\begin{definition}[Path] Let $G=(V,E)$ be an undirected graph. A sequence
of vertices $U  =(u_0,u_1,\ldots,u_{\ell})$ with $u_i \in V$, for $0 \leq i \leq \ell$, is
called a \emph{path} from $u_0$ to $u_{\ell} $ if, for every pair $(u_i, u_{i+1})$ with $0 \leq i \leq n-1$, there is
an edge $\{ u_i, u_{i+1} \} \in E$. Note that a path might visit some vertices several times, that is, it is possible that $u_i=u_j$ if $i\neq j$. %Nevertheless, we assume $u_i\neq u_{i+1}$ for all $1 \leq i \leq \ell-1$. 

Let $p =(u_1,u_2,\ldots,u_{\ell})$ be a path and let $e = \{x,y\} \in E$. We say that $e \in P$ if there exists $1\leq i\leq \ell-1$ such that $x=u_i$ and $y= u_{i+1}$.

Paths in directed graphs can be defined analogously.
\end{definition}

For analyzing the quality of some algorithm controlling the behaviour of the agent, we use the so-called \emph{competitive analysis}, which is the standard tool for the analysis of online algorithms. To this end, we start with a formal definition of online problems and the competitive ratio. These are the standard definitions that can be found, for example, in \cite{DK,ST,BE}.

\begin{definition}[Online problem] An \emph{online problem} $\Pi$ consists of a class
of instances $\mathcal{I}$ and a class of solutions $\mathcal{O}$. Every instance $I \in \mathcal{I}$ is a sequence of requests $I = ( x_1,x_2,\ldots,x_n )$
and every output $O \in \mathcal{O}$ is a sequence of answers $O = (y_1,y_2,\ldots,y_n)$, where
$n \in \N^+$ (thus, all instances and solutions are finite).

An \emph{online algorithm} $\textsc{alg}$ computes
the output $\textsc{alg}(I) = (y_1,y_2,\ldots,y_n)$, where $y_i$ only depends on $x_1,x_2,\ldots,x_i$
and $y_1,y_2,\ldots,y_{i-1}$.
\end{definition}

\begin{definition}[Competitive ratio] Let $\Pi$ be an online minimization problem, let
\textsc{alg} be a consistent online algorithm for $\Pi$, and let \textsc{opt} be an optimal algorithm for $\Pi$. For $r \geq 1$, \textsc{alg} is \emph{$r$-competitive} for
$\Pi$ if there is a non-negative constant $\alpha$ such that, for every instance $I \in \mathcal{I}$,  $$\textsc{cost(alg)}\leq r \cdot\textsc{cost(opt)} + \alpha.$$  If this inequality holds with $\alpha = 0$, we
call \textsc{alg} \emph{strictly $r$-competitive}. 

\end{definition}

Note that graph exploration is not exactly an online problem according to the above definition because the presentation of the input, i.\,e., in our case, the order in which the edge weights are revealed, is not chosen freely by an adversary, but depends on the decisions made by the algorithm. Nevertheless, we can use the same framework as for online algorithms to analyze it.

We start with formally defining the variants of the graph exploration problem we consider in this paper.

\begin{definition}[Graph exploration problem] The \emph{graph exploration problem in a known undirected graph structure with unknown edge weights}, denoted by GEP-UW, is the following minimization problem. An undirected graph $G=(V,E)$ with a start vertex $s_0 \in V$ is given. An agent is located in $s_0$, and its goal is to find a path through the graph visiting each vertex at least once. In one time step, the agent can move from its current vertex to any of the neighboring vertices. Whenever the agent visits a vertex for the first time, the weights of all incident edges are revealed. The agent knows the graph structure from the beginning and has unlimited computing power, in particular, it can remember all edge weights seen so far. The goal is to minimize the sum of edge weights on the path traveled by the agent. More precisely, for some algorithm $\textsc{alg}$ using the agent on some graph $G$, we denote by $p_A=v_1,\ldots,v_\ell$ the path traveled by the agent, and by $\textsc{cost}_{\textsc{alg}}(G)= \sum_{i=1}^{\ell-1} c({v_i,v_{i+1}})$ the cost of $p_A$.

The \emph{graph exploration problem in a directed graph with known graph structure and unknown edge weights and orientations}, denoted by GEP-DW, is defined in an analogous way. The agent knows the underlying undirected graph beforehand. The difference to GEP-UW is that every edge has a unique orientation in which it may be traveled that becomes known with the first visit to an incident vertex, and then the weight of an edge is revealed only for the outgoing edges of a vertex. Note that we only consider directed graphs here for which, for any edge $\{u,v\}$ of the underlying undirected graph, exactly one of the arcs $(u,v)$ and $(v,u)$ is present.
\end{definition}

We consider any deterministic online algorithm approach for the decision making of the agent. That is, the algorithm sees, and remembers, the weights (and directions of the outgoing edges for the directed case) of every edge leaving from any visited vertex. We want to analyze deterministic online algorithms for GEP-UW and GEP-DW.

We say that an \emph{agent} uses the online algorithm. Hence, the agent sees the edges leaving from the current vertex, and recognizes the vertices at the end of those edges. We assume that the starting vertex of the agent's path is $s_0=(1,1)$. In the figures, it is located in the top left corner.

An optimal solution for visiting each vertex of a given graph is the corresponding solution computed by an exact offline algorithm. We denote an arbitrary, but fixed optimal path by $p_O=v_1,\ldots,v_k$. For any graph, we get the cheapest possible cost for visiting all the vertices of some instance $I \in \mathcal{I}$ by using an exact offline algorithm. We denote this cost by $$\textsc{cost}_{\textsc{opt}}(I)=\sum_{i=1}^{k-1} c({v_i,v_{i+1}}).$$ In this paper, we are measuring the quality of \textsc{alg} by comparing it to the optimal algorithm. The result of this comparison is the already presented competitive ratio that we denote by $\textsc{comp}_{\textsc{alg}}(I).$ By the definition of the competitive ratio, it is easy to see that, if, for some algorithm \textsc{alg} and for all instances $I$ of size $n$, $$\textsc{comp}_{\textsc{alg}}(I) \xrightarrow [n \rightarrow \infty]{}r,$$ then the algorithm \textsc{alg} is $r$-competitive. Moreover, if, for some algorithm \textsc{alg} and for all instances $I$ of size $n$, $$\textsc{comp}_{\textsc{alg}}(I) \xrightarrow [n \rightarrow \infty]{}1,$$ then we say that the algorithm \textsc{alg} always finds an \emph{almost optimal strategy} for visiting all the vertices of these instances.

\section{Undirected Graphs}

In this section, we prove a lower bound and an upper bound on the competitive ratio of deterministic online algorithms for GEP-UW on undirected weighted ladder graphs $G_{2,n}$. The motivation of studying these bounds for our model arises as, even though the agent has quite a lot of power, one can easily see that the resulting problem still is very hard and no deterministic agent can reach a satisfactory solution in arbitrary graphs. Thus, we restrict our attention to ladders. Because of the hardness of the problem, we restrict the edge weights to two different values $1$ and $k$, for some arbitrary natural number $k \geqslant 5$. At first we prove that, for GEP-UW on $G_{2,4}$, no deterministic algorithm can reach a better strict competitive ratio than $\frac{5}{4}$. Then we generalize this result and get a lower bound of $\frac{11}{9}$ on the (non-strict) competitive ratio for $G_{2,n}$. Finally, we complement the result with a 4-competitive algorithm.

\subsection{Lower Bounds}

In Theorem 1, we prove that, for any deterministic online algorithm \textsc{alg}, $\frac{5}{4}$ is a lower bound on the strict competitive ratio for GEP-UW for a known ladder graph $G_{2,4}$, in which the edges have weights 1 and $k$, for any $k \geqslant 5$. Next, in Theorem 2, we generalize this result by showing that $\frac{11}{9}$ is a lower bound of the competitive ratio for GEP-UW on a known ladder graph $G_{2,n}$, in which the edges have weights 1 and $k$, again for any $k \geqslant 5$.

\begin{theorem} Let $G_{2,4}$ be a known undirected and weighted ladder graph, where weights of the edges are 1 or $k$, for a constant $k\geqslant5$. The strict competitive ratio for any deterministic online algorithm \textsc{alg} for GEP-UW on $G_{2,4}$ has a lower bound of $\frac{5}{4}$.

\end{theorem}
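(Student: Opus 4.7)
The plan is to design an adversary that, against any deterministic online algorithm \textsc{alg}, reveals edge weights in response to \textsc{alg}'s moves so that the final weight assignment on $G_{2,4}$ (with values in $\{1,k\}$) satisfies $\textsc{cost}_{\textsc{alg}}(G_{2,4}) \geq \tfrac{5}{4}\,\textsc{cost}_{\textsc{opt}}(G_{2,4})$. Because the claimed ratio is \emph{strict}, no additive slack is tolerated, so the adversary must use a small number of ``trap'' edges such that \textsc{opt} has a short fixed tour while \textsc{alg} is provably forced into an excess of cost. The adversary commits to a weight only when the agent first enters an incident vertex, and may freely fix the still-uncommitted weights at the end of each branch so as to minimise \textsc{opt}.

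A symmetry reduction comes first: the adversary declares both edges incident to $s_0 = (1,1)$ to have weight~$1$, so \textsc{alg} has no local information to distinguish its two possible first moves; by relabeling the two rows (the natural reflection automorphism of the ladder), we may assume without loss of generality that the first move is to $(1,2)$. Upon \textsc{alg}'s first arrival at a new vertex $v$, the adversary fixes the weights of the unvisited edges incident to $v$ so that one direction looks locally cheapest but leads into a region where an unavoidable heavy edge (or a pair of backtracking steps) awaits. Each leaf of the resulting decision tree corresponds to a completed exploration, and there the adversary finalises the yet-uncommitted edges to exhibit a cheap tour for \textsc{opt}.

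The core of the argument is the case analysis along this tree. At each internal node one tracks (i) the cost accumulated by \textsc{alg} so far and (ii) a lower bound on the cost still needed to visit all unexplored vertices through the already-revealed edges, and maintains the invariant that $(i)+(ii)$ exceeds $\tfrac{5}{4}$ times the cost of a Hamiltonian-type tour that the adversary can still guarantee to \textsc{opt}. After the opening move to $(1,2)$, the main subcases branch on whether \textsc{alg} (a)~takes a newly shown heavy edge immediately, (b)~descends to $(2,2)$, or (c)~backtracks to $(1,1)$; each of these branches a bounded number of further times before all eight vertices are covered, and on each terminal branch one compares \textsc{alg}'s final cost to the cost of an explicit tour the adversary exhibits for \textsc{opt}.

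The main obstacle is making the ratio reach $\tfrac{5}{4}$ on the branch most favourable to \textsc{alg}, namely the one in which \textsc{alg} snakes through the ladder without superfluous backtracking yet still pays at least one extra heavy edge compared to the optimum. The requirement $k\geq 5$ surfaces precisely here: for smaller $k$, a single extra heavy edge would be cheap enough relative to the six or seven light edges of the tour to pull the ratio below $\tfrac{5}{4}$. The remaining branches (those in which \textsc{alg} backtracks early or enters a heavy edge unnecessarily) yield strictly larger ratios and pose no additional difficulty, so it is the extremal branch that determines the bound.
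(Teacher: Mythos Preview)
Your overall framework (an adaptive adversary plus a finite case analysis on the agent's decision tree) matches the paper's approach, but two concrete points in your plan are wrong and would derail the argument if you carried it out as stated.

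First, the symmetry reduction does not hold. The reflection swapping the two rows of $G_{2,4}$ sends $(1,1)$ to $(2,1)$; it does \emph{not} fix the start vertex $s_0=(1,1)$, and there is no automorphism fixing $(1,1)$ that swaps its two neighbours, since $(1,2)$ has degree $3$ while $(2,1)$ has degree $2$. The paper accordingly handles the two first moves with two genuinely different hard instances $I_1$ and $I_2$; you cannot collapse them.

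Second, and more importantly, your description of the extremal branch and of the role of $k\ge 5$ is inverted. In the paper's construction the optimal tour for each instance uses only weight-$1$ edges and costs $8$, and the best play for \textsc{alg} \emph{also} avoids every $k$-edge but is forced into two units of backtracking, for a cost of $10$; the ratio $10/8=5/4$ involves no heavy edge on either side. The hypothesis $k\ge 5$ is not used to make a heavy edge ``expensive enough'' to push the ratio up to $5/4$; rather, it guarantees that any path containing a $k$-edge costs at least $11$, so that the $k$-edges function purely as blockers that force backtracking and can be ignored when computing the algorithm's best response. If you set up the analysis expecting the tight branch to feature \textsc{alg} paying an extra heavy edge, you will either overshoot (getting a ratio far above $5/4$ on that branch, which is fine but not tight) or, worse, miss the branch where \textsc{alg} cleverly avoids all $k$-edges and still must be shown to pay at least $10$.
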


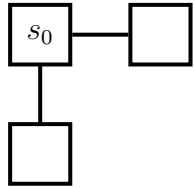
\begin{figure}
\begin{center}
\scalebox{1.0}{% Graphic for TeX using PGF
% Title: D:\ETH2\initial.dia
% Creator: Dia v0.97.2
% CreationDate: Mon Sep 12 14:34:35 2016
% For: Käyttäjä
% \usepackage{tikz}
% The following commands are not supported in PSTricks at present
% We define them conditionally, so when they are implemented,
% this pgf file will use them.
\ifx\du\undefined
  \newlength{\du}
\fi
\setlength{\du}{15\unitlength}
\begin{tikzpicture}
\pgftransformxscale{1.000000}
\pgftransformyscale{-1.000000}
\definecolor{dialinecolor}{rgb}{0.000000, 0.000000, 0.000000}
\pgfsetstrokecolor{dialinecolor}
\definecolor{dialinecolor}{rgb}{1.000000, 1.000000, 1.000000}
\pgfsetfillcolor{dialinecolor}
\pgfsetlinewidth{0.100000\du}
\pgfsetdash{}{0pt}
\pgfsetdash{}{0pt}
\pgfsetmiterjoin
\definecolor{dialinecolor}{rgb}{1.000000, 1.000000, 1.000000}
\pgfsetfillcolor{dialinecolor}
\fill (8.407500\du,2.485000\du)--(8.407500\du,3.985000\du)--(9.907500\du,3.985000\du)--(9.907500\du,2.485000\du)--cycle;
\definecolor{dialinecolor}{rgb}{0.000000, 0.000000, 0.000000}
\pgfsetstrokecolor{dialinecolor}
\draw (8.407500\du,2.485000\du)--(8.407500\du,3.985000\du)--(9.907500\du,3.985000\du)--(9.907500\du,2.485000\du)--cycle;
\pgfsetlinewidth{0.100000\du}
\pgfsetdash{}{0pt}
\pgfsetdash{}{0pt}
\pgfsetmiterjoin
\definecolor{dialinecolor}{rgb}{1.000000, 1.000000, 1.000000}
\pgfsetfillcolor{dialinecolor}
\fill (5.407500\du,2.485000\du)--(5.407500\du,3.985000\du)--(6.907500\du,3.985000\du)--(6.907500\du,2.485000\du)--cycle;
\definecolor{dialinecolor}{rgb}{0.000000, 0.000000, 0.000000}
\pgfsetstrokecolor{dialinecolor}
\draw (5.407500\du,2.485000\du)--(5.407500\du,3.985000\du)--(6.907500\du,3.985000\du)--(6.907500\du,2.485000\du)--cycle;
\pgfsetlinewidth{0.100000\du}
\pgfsetdash{}{0pt}
\pgfsetdash{}{0pt}
\pgfsetmiterjoin
\definecolor{dialinecolor}{rgb}{1.000000, 1.000000, 1.000000}
\pgfsetfillcolor{dialinecolor}
\fill (5.407500\du,5.485000\du)--(5.407500\du,6.985000\du)--(6.907500\du,6.985000\du)--(6.907500\du,5.485000\du)--cycle;
\definecolor{dialinecolor}{rgb}{0.000000, 0.000000, 0.000000}
\pgfsetstrokecolor{dialinecolor}
\draw (5.407500\du,5.485000\du)--(5.407500\du,6.985000\du)--(6.907500\du,6.985000\du)--(6.907500\du,5.485000\du)--cycle;
\pgfsetlinewidth{0.100000\du}
\pgfsetdash{}{0pt}
\pgfsetdash{}{0pt}
\pgfsetbuttcap
{
\definecolor{dialinecolor}{rgb}{0.000000, 0.000000, 0.000000}
\pgfsetfillcolor{dialinecolor}
% was here!!!
\definecolor{dialinecolor}{rgb}{0.000000, 0.000000, 0.000000}
\pgfsetstrokecolor{dialinecolor}
\draw (8.357329\du,3.235000\du)--(6.957671\du,3.235000\du);
}
\pgfsetlinewidth{0.100000\du}
\pgfsetdash{}{0pt}
\pgfsetdash{}{0pt}
\pgfsetbuttcap
{
\definecolor{dialinecolor}{rgb}{0.000000, 0.000000, 0.000000}
\pgfsetfillcolor{dialinecolor}
% was here!!!
\definecolor{dialinecolor}{rgb}{0.000000, 0.000000, 0.000000}
\pgfsetstrokecolor{dialinecolor}
\draw (6.157500\du,3.985000\du)--(6.157500\du,5.485000\du);
}
% setfont left to latex
\definecolor{dialinecolor}{rgb}{0.000000, 0.000000, 0.000000}
\pgfsetstrokecolor{dialinecolor}
\node[anchor=west] at (6.157500\du,3.235000\du){};
% setfont left to latex
\definecolor{dialinecolor}{rgb}{0.000000, 0.000000, 0.000000}
\pgfsetstrokecolor{dialinecolor}
\node at (6.157500\du,3.235000\du){\large $s_0$};
\end{tikzpicture}}
\end{center}
\caption{The initial knowledge of an agent for the GEP-UW for $G_{2,4}$ ($G_{2,n}$ respectively).}
\label{initial}
\end{figure}

\begin{proof}

Let \textsc{alg} be an arbitrary deterministic algorithm. From the starting point $s_0$, the agent sees two edges, as presented in Figure \ref{initial}. In the hard instances we are considering, both of these edges have weight 1. In the first time step, the agent chooses to travel one of the two edges. Depending on the agent's first move, the adversary chooses one of two hard instances, $I_1$ and $I_2$. Instances $I_1$ and $I_2$ with the optimal paths for them are presented in Figure \ref{eka1}. We can see that $\textsc{cost}_{\textsc{ opt}}(G_{2,4})=8$ for both cases. Let us next analyze all the further possible movements of the agent on these two instances, $I_1$ and $I_2$.

\begin{figure}
\begin{center}
\scalebox{1.0}{\input{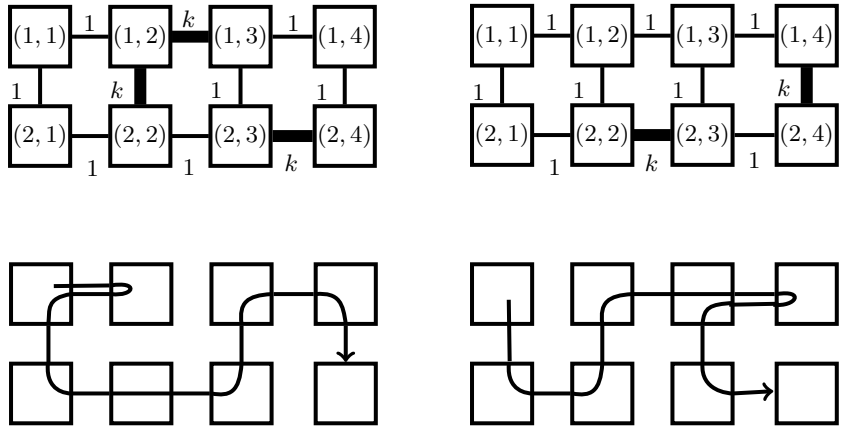}}

\end{center}
\caption{Two hard instances, $I_1$ and $I_2$, for $G_{2,4}$, and the optimal paths for them.}
\label{eka1}
\end{figure}

\begin{figure}
\begin{center}
\scalebox{1.0}{\input{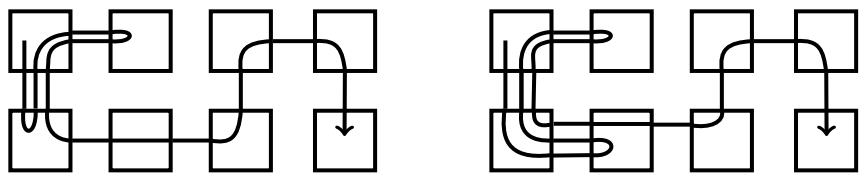}}

\end{center}
\caption{The agent's paths for the hard instance of $G_{2,4}$, in the case $I_1$.}
\label{eka2}
\end{figure}

We consider the case $I_1$, with a graph $G_{2,4}$. This case is chosen by the adversary if the agent first chooses to travel down to the vertex $(2,1)$. We can divide our analysis into two cases. Either the agent turns back to the starting vertex $s_0$, or it continues right to the vertex $(2,2)$. We have presented both of these cases in Figure \ref{eka2}. If the agent returns to $s_0$, it will naturally continue to the vertex $(1,2)$. After that, as the agent will not travel an edge with weight $k$ (this would immediately lead to a total cost of at least $7 + k-1 \geq 11$), it will turn back and travel counterclockwise until it reaches the vertex $(2,3)$. At last, the agent travels the path $\{(2,3),(1,3),(1,4),(2,4)\}$ that avoids all the edges with weight $k$. In this case, the cost of $p_A$ is 10. On the other hand, if the agent continues to the vertex $(2,2)$ after the first move, there are again two possible ways for it to proceed. Either to turn back to the vertex $(2,1)$, or continue to the vertex $(2,3)$. Let us first assume that the agent turns back. Again, the agent travels clockwise until the vertex $(1,2)$. At this point, it avoids the edge with a weight $k$, and explores the rest of the vertices exactly as it did before, that is, it takes a path $\{(1,2),(1,1),(2,1),(2,2),(2,3),(1,3),(1,4),(2,4)\}$. We see that, in this case, the cost of $p_A$ is 12. Let us now assume that, after the second time step, the agent continues to the right-hand side. Obviously, the agent must turn back to visit the vertex $(1,2)$ at some point, hence, the further right the agent continues, the worse the path becomes, leading to a cost of at least 12 in this case. Thus, the competitive ratio is $\frac{10}{8}$ in the case $I_1$.

\begin{figure}
\begin{center}
\scalebox{1.0}{\input{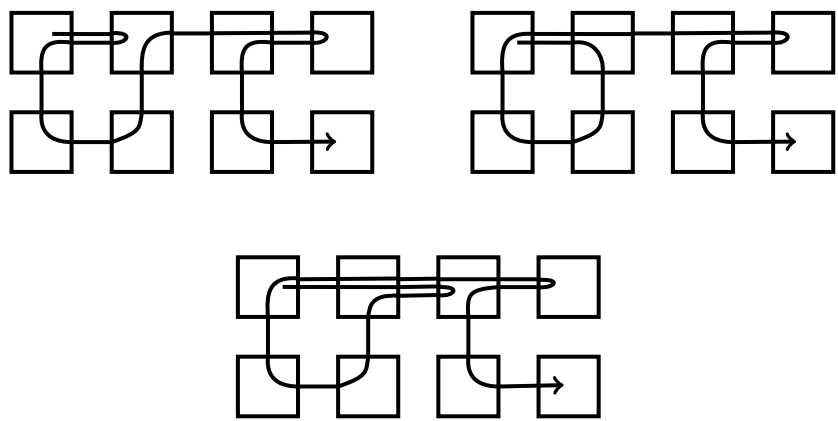}}

\end{center}
\caption{The agent's paths for the hard instances $G_{2,4}$, in the case $I_2$.}
\label{eka3}
\end{figure}

The second case, $I_2$, is presented by the adversary if the agent first chooses to travel right to the vertex $(1,2)$. This time, we divide our analysis into three cases. Either the agent turns back to the starting vertex $s_0$, or it continues right to the vertex $(1,3)$, or it goes down to the vertex $(2,2)$. We have presented all of these cases in Figure \ref{eka3}. If the agent returns to $s_0$, it will naturally continue to the vertices $(2,1)$ and $(2,2)$. After that, the agent avoids an edge with a weight $k$ (for the same reason as in the first case), and travels the path $\{(2,2),(1,2),(1,3)\}$. Next, at the vertex $(1,3)$, the agent has to decide whether to keep going right or to go down. However, it is easy to see that the cheapest way is to take a path $\{(1,3),(1,4),(1,3),(2,3),(2,4)\}$. Hence, the cost of $p_A$ in the first case of this second instance is 10. The other choice for the agent after the first move is to go down to the vertex $(2,2)$. As the agent will not travel an edge with weight $k$, it must return to the vertex $(1,2)$ after visiting the vertex $(2,1)$, either via $(2,2)$ or via $(1,1)$. After that, the agent is in a identical situation as in the first case, hence, it will travel a path $\{(1,2),(1,3),(1,4),(1,3),(2,3),(2,4)\}$. The cost of $p_A$ is again 10. In the last case, the agent keeps going right to the vertex $(1,3)$ after the first move. Because of the locations of the edges with weight $k$, it is clear that the agent must return the same way at some point to explore the vertices $(2,1)$ and $(2,2)$, hence, the further right the agent continues, the worse the algorithm becomes leading to a cost of at least 12 in this case. It follows that the competitive ratio is $\frac{10}{8}$, also in the case $I_2$.

Thus the strict competitive ratio for \textsc{alg} for $G_{2,4}$ is $\frac{10}{8} = \frac{5}{4}$. \qed

\end{proof}

Let us next generalize the result from Theorem 1 to the case of arbitrarily large $2 \times n$ grids.

\begin{theorem} For some $n \in \N$ that is a multiple of 4, let $G_{2,n}$ be a known undirected and weighted ladder graph, where weights of the edges are 1 or $k$, for some constant $k$. The competitive ratio for any deterministic online algorithm \textsc{alg} has a lower bound of $\frac{11}{9}$ for GEP-UW on $G_{2,n}$. 

\end{theorem}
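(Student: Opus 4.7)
The plan is to amplify the single-block lower bound of Theorem~1 by chaining $n/4$ copies of the hard $G_{2,4}$-instance end to end. Partition the columns of $G_{2,n}$ into $n/4$ consecutive \emph{blocks} $B_1,\ldots,B_{n/4}$, with $B_j$ spanning columns $4j{-}3,\ldots,4j$. The adversary reveals edge weights block by block: when the agent first enters $B_j$ at its left boundary (vertex $(1,4j{-}3)$ or $(2,4j{-}3)$), the adversary selects one of two bad sub-instances $I_1^{(j)}$ and $I_2^{(j)}$, each a reflected copy of $I_1$ or $I_2$ from Theorem~1 rotated to match the entry row, based on the agent's first move inside $B_j$. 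All bridge edges crossing between consecutive blocks, i.e., $\{(1,4j),(1,4j{+}1)\}$ and $\{(2,4j),(2,4j{+}1)\}$, get weight~$1$.

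Next I would bound $\textsc{cost}_{\textsc{opt}}$ from above. An offline strategy traverses the blocks in order, spending exactly $8$ inside each block (by the optimal tour of Theorem~1, whose endpoint can be chosen on either side of the block by symmetry of $I_1$/$I_2$) and a single unit of weight to cross each of the $n/4-1$ bridge edges linking consecutive blocks at the appropriate row. This gives
$$\textsc{cost}_{\textsc{opt}}\;\le\;8\cdot(n/4)+(n/4-1)\;=\;9n/4-1.$$
Then I would bound $\textsc{cost}_{\textsc{alg}}$ from below by showing that, for every $j$, the online agent is forced to spend at least $10$ weight units on edges inside $B_j$ plus at least $1$ weight unit on a bridge edge into $B_{j+1}$. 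The per-block lower bound is the case analysis of Theorem~1 applied to $B_j$, with the agent's entry vertex playing the role of $s_0$ and the construction of $I_1^{(j)}$/$I_2^{(j)}$ chosen accordingly. Summing gives
$$\textsc{cost}_{\textsc{alg}}\;\ge\;10\cdot(n/4)+(n/4-1)\;=\;11n/4-1,$$
so $\textsc{cost}_{\textsc{alg}}/\textsc{cost}_{\textsc{opt}}\to 11/9$ as $n\to\infty$; by the definition of competitive ratio with an additive constant $\alpha$, this establishes the claimed lower bound of $11/9$.

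The main obstacle is justifying the per-block lower bound on $\textsc{cost}_{\textsc{alg}}$, because nothing in the model forces the agent to finish exploring a block before leaving it: in principle, the agent could abandon a partially explored $B_j$, wander through subsequent blocks, and return later. I would handle this by arguing that any such re-entry charges the agent at least the two unit-weight bridge edges used to leave and come back, which can only increase $\textsc{cost}_{\textsc{alg}}$ relative to completing the Theorem~1 case analysis on $B_j$ in one pass. Since the $k$-weighted edges inside each block are placed exactly as in $I_1$/$I_2$ (up to reflection), the restriction of the agent's trajectory to $B_j$ is a valid trajectory in the Theorem~1 setting and therefore has cost at least $10$, giving the desired block-wise bound.
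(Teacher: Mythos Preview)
Your high-level plan --- chain $n/4$ copies of the Theorem~1 gadget and sum the per-block bounds --- is exactly the paper's approach. But one construction detail differs, and it breaks the argument. You set \emph{both} bridge edges to weight~$1$ and orient each block according to where the algorithm happens to enter; the paper instead assigns weight~$k$ to one bridge edge and weight~$1$ to the other, and \emph{fixes} the orientations in advance as an alternating mirror pattern. The step that fails in your version is the optimal upper bound: your claim that ``the endpoint can be chosen on either side of the block by symmetry of $I_1/I_2$'' is false. In both $I_1$ and $I_2$ the cost-$8$ tour from the designated start corner ends at the diagonally opposite corner $(2,4)$ and nowhere else; there is no symmetry that relocates it.

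This matters because the algorithm can now steer the block orientations to make \textsc{opt} expensive. Consider the algorithm that, in every block, moves down first (so the adversary plays $I_1$), completes the cost-$10$ tour of Theorem~1 ending at the bottom-right corner, then moves up one step (weight~$1$ in $I_1$) and crosses the \emph{top} bridge. Every block is then un-rotated $I_1$. The offline optimum, having finished a block at its bottom-right corner, must enter the next block from the bottom-left (``wrong'') corner, where the cheapest tour of $I_1$ costs $9$, not $8$. Hence $\textsc{cost}_{\textsc{opt}}\approx 10m$ while $\textsc{cost}_{\textsc{alg}}\approx 12m$, giving ratio $\to 6/5<11/9$; already for $m=3$ one gets $34/28=17/14<11/9$. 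So your adversary does not certify the $11/9$ lower bound.

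The paper's weight-$k$ bridge edge is exactly the device that rules out this manoeuvre: with one bridge costing $k$, any algorithm that avoids $k$-edges is forced to cross at the weight-$1$ bridge, entering the next (pre-mirrored) block at its designated corner. Then Theorem~1 applies verbatim in each block, and \textsc{opt} can do the cost-$8$ tour block by block, giving $\textsc{cost}_{\textsc{opt}}=9m+O(1)$ and $\textsc{cost}_{\textsc{alg}}\ge 11m-O(1)$. Replacing your all-$1$ bridges with the paper's $k/1$ bridges and the fixed mirror pattern repairs the proof.
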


\begin{proof} To prove Theorem 2, we employ the proof of Theorem 1. We note that both optimal solutions for the two instances used in the proof of Theorem 1 end in the vertex $(1,n)$, and have the same cost, 8. As, for any algorithm, visiting all the vertices and ending the vertex $(1,n)$ is a harder task than just visiting all the vertices, we know that the solution computed by any deterministic algorithm is at least as expensive as if the algorithm would end in some other vertex.

\begin{figure}
\begin{center}
\scalebox{0.73}{\input{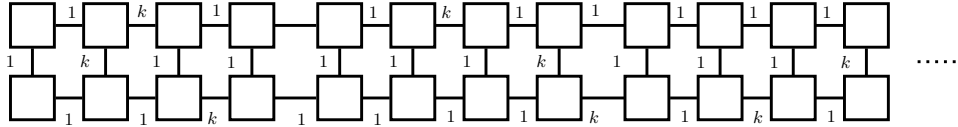}}

\end{center}
\caption{An example of hard instance of $G_{2,n}$.}
\label{eka5}
\end{figure}

We will construct the hard instance $G_{2,n}$ by using the two hard instances from the proof of Theorem 1, which are weighted graphs $G_{2,4}$, as gadgets. In $G_{2,n}$, every second gadget is a mirror image of the corresponding $G_{2,4}$ (i.\,e., the upper and lower rows are reversed ). Hence, when we have a row of such gadgets, the odd ones are identical to the original $G_{2,4}$ from Theorem 1, and the even ones are mirrored from the original $G_{2,4}$. An edge connecting an original $G_{2,4}$ and the mirror image of it, in that order, has a weight $k$, if it is between vertices $(1,x)$ and $(1,y)$, where $x,y \in \{1,2,\ldots,n \}$, and a weight 1 if it is between vertices $(2,x)$ and $(2,y)$, where $x,y \in \{1,2,\ldots,n \}$. Correspondingly, an edge connecting a mirror image and its original $G_{2,4}$, in that order, has a weight $1$, if it is between vertices $(1,x)$ and $(1,y)$, where $x,y \in \{1,2,\ldots,n \}$, and a weight k if it is between vertices $(2,x)$ and $(2,y)$, where $x,y \in \{1,2,\ldots,n \}$. We have demonstrated one example of the $G_{2,n}$ in Figure \ref{eka5}, in the picture, the edges connecting the gadgets are marked with a longer line.

Because of the gadgets, a deterministic algorithm will behave as above in the proof of Theorem 1. After moving from one gadget to the next, the algorithm is again facing the same situation. No matter which of the two edges it chooses, the adversary can present the respective hard $G_{2,4}$ as the next gadget. Obviously, any algorithm that leaves some vertices unvisited when going to the next gadget has to return again at some point which will increase its cost too much compared to any algorithm that visits the gadgets in order.

If we have $m = \frac{n}{4}$ gadgets in one instance, then the cost of the optimal solution is $m(8+1)$. The estimated cost of the solutions computed by any deterministic algorithm  must be at least $m(10+1)$. It follows that the lower bound of the competition ratio for the QEP-UW for $G_{2,n}$ is $\frac{11}{9}+ \alpha$, where $\alpha$ is a constant. \qed

\end{proof}

\subsection{Upper Bound}

Next, in Theorem 3, we complement our result with an upper bound on the competitive ratio. Thus, we prove that, there exists a deterministic online algorithm \textsc{alg}, that achieves a competitive ratio of 4 for GEP-UW on a known undirected and weighted ladder graph $G_{2,n}$. 

First, we present and prove Lemma 1, that we will use in Theorem 3. There, we estimate the cost of finishing the task, when the agent has already traveled from starting vertex $s_0$ to the vertex $(1,n)$.

\begin{lemma} Let $G_{2,n}$ be a known undirected and weighted ladder graph, where weights of the edges are 1 or $k$, where $k$ is a constant. If the agent has already traveled from $s_0$ to the vertex $(1,n)$, it can visit the rest of the vertices via a path whose cost is asymptotically at most twice that of an optimal GEP-UW solution, for growing values of k.

\end{lemma}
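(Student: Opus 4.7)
I would prove this by combining a walk-reversal existence argument with an online realization strategy that matches it up to lower-order terms.

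For the existence part, let $\pi=v_0,v_1,\ldots,v_\ell$ be an OPT walk from $v_0=s_0=(1,1)$ covering every vertex, and let $i$ be an index with $v_i=(1,n)$, which exists because OPT visits every vertex. Consider the walk
\[
W \;=\; v_i,v_{i-1},\ldots,v_1,v_0,v_1,\ldots,v_{i-1},v_i,v_{i+1},\ldots,v_\ell,
\]
which starts at $(1,n)$ and visits every vertex because $\pi$ does. Its cost is twice the prefix cost of $\pi$ from $v_0$ to $v_i$ plus the suffix cost from $v_i$ to $v_\ell$, hence at most $2\,\textsc{cost}_{\textsc{opt}}$. So there is always a walk from $(1,n)$ covering the remaining vertices with cost at most $2\,\textsc{cost}_{\textsc{opt}}$.

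The online realization uses that by the time the agent is at $(1,n)$ under the natural algorithm of Theorem~3, it has walked across row~1 and therefore knows every rung weight $w_j$ and every row~1 edge weight $a_j$; only the $n-1$ row~2 horizontal weights $b_j$ remain unknown. The agent descends via $w_n$ and then sweeps leftward along row~2 adaptively: at $(2,j)$ the revealed $b_{j-1}$ is taken directly if its weight is $1$, and otherwise bypassed by ascending the rung $w_j$, walking left on row~1 to the nearest column $j'<j$ with $w_{j'}=1$, and redescending to rejoin row~2. The $1$-weight row~2 edges traversed contribute at most $n-1$ to the total; each detour adds two cheap rung crossings plus a row~1 segment whose cumulative length over all detours is bounded by $\sum_j a_j$; and the few unavoidable $k$-weight crossings---namely $w_n$ in the initial descent and at most a constant number of forced $k$-weight row~2 edges straddling rungless segments---contribute an additive $O(k)$. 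This yields a completion cost of $2\,\textsc{cost}_{\textsc{opt}}+O(k)$, which is $(2+o(1))\,\textsc{cost}_{\textsc{opt}}$ as $k$ (and hence $\textsc{cost}_{\textsc{opt}}\ge 2n-1$, and in genuinely expensive instances $\Omega(k)$) grows.

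The main obstacle is making the online algorithmic bound match the $2\,\textsc{cost}_{\textsc{opt}}$ existence bound: the agent must decide, without knowledge of $\pi$, when to snake through a $b$-edge versus detour through the known row~1, and an adversarial sequence of revealed weights could in principle cause misjudgements. The ``asymptotically'' qualifier absorbs this: each wrong commit is a one-time $O(k)$ overhead, whereas any OPT in which $k$-edges actually figure has $\textsc{cost}_{\textsc{opt}}=\Omega(k)$, so those errors are squeezed into the lower-order term and the ratio tends to $2$ as $k\to\infty$.
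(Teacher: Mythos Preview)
Your existence argument is correct but offline: the agent does not know $\pi$ and so cannot follow $W$. The online strategy you then propose has two genuine gaps. First, the bypass detour fails to cover row~2: when $b_{j-1}=k$ and you jump to the nearest $j'<j$ with $w_{j'}=1$, the vertices $(2,j-1),\ldots,(2,j'+1)$ are never visited whenever $j'<j-1$. (Take $n$ even, $w_j=1$ exactly for even $j$, and $b_j=k$ exactly for odd $j$; your sweep reaches only the even-indexed bottom vertices.) Second, the cost accounting breaks. You claim each detour uses ``two cheap rung crossings'', but the ascent rung $w_j$ need not be cheap: the agent may have arrived at $(2,j)$ via $b_j=1$ from the right with $w_j=k$. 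Even the very first step suffices for a counterexample: with $w_n=k$ and every other edge of weight~1, OPT covers everything with cost at most $2n+1$ while avoiding $w_n$, yet your strategy pays $k$ to descend at column~$n$, so the completion-to-OPT ratio is $\Theta(k/n)\to\infty$ as $k\to\infty$. Your ``at most a constant number of forced $k$-weight row-2 edges'' is asserted without argument and is not true of this strategy; the number of $k$-crossings it incurs can grow with $n$, so the asymptotic rescue in your last paragraph does not go through.

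The paper avoids both issues by sweeping back along the \emph{known} row~1 rather than the unknown row~2. Standing above the rightmost unvisited bottom vertex, the agent either (Case~1) walks left on row~1 to a cheap rung within distance $m\le k$, descends there, and walks \emph{right} along row~2 to pick up all $m$ pending bottom vertices---this loop is precisely what guarantees coverage---or (Case~2), if no cheap rung is within reach, descends directly; in that case a $k$-separator forces OPT to cross a $k$-edge on the same stretch, which is the charging argument that replaces your unsubstantiated constancy claim.
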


\begin{proof} The agent has already traveled from the vertex $(1,1)$ to the vertex $(1,n)$. Thus, it knows the weight of at least $\frac{2}{3}$ of all the edges. Still, at most $\frac{1}{3}$ of all the edge weights are unknown for the agent. We first assume that the agent traveled from the vertex $(1,1)$ to the vertex $(1,n)$ via the path $(1,1),(1,2),\ldots,(1,n)$. Later on, we will generalize our argument to any path from $(1,1)$ to $(1,n)$. We consider the problem by dividing the proof into two cases, because, if, on the way back, the agent stands above the rightmost unvisited vertex, there are two possibilities.

In the first case, we assume, that the agent faces the situation, where there are $m$ visited vertices on its left side, that are connected to each other with edges that have a weight 1, with any $1 <m\leq k$. Also, from the first $m-1$ vertices the edges leading to the lower row have a weight $k$, but the edge leading to the lower row from the $m$th vertex has a weight 1. With $m=1$, this naturally includes also the situation where there are no $k$-edges leading to the lower row. We have demonstrated these situations in Figure \ref{lemma1.1}, with $m=4$ and $m=1$. Here, the agent first makes a loop, by traveling first to the left on the upper row to explore again the $m$ vertices and then traveling to the right on the lower row, as shown in Figure \ref{lemma1.2}. After that, the agent moves to the left on a cheapest path to the next unvisited vertex or a vertex above it. During the loop, if the subpath consisting of unvisited and visited or only unvisited edges contains an edge of weight $k$, then also the optimal path must travel an edge of weight $k$ due to the ``$k$-separator'' we have presented in Figure \ref{lemma1.3}. The agent's way back to the left travels through a completely known part of the graph and is thus optimal. It follows that the cost of the agent's path in this case is at most two times the optimal cost. 

\begin{figure}
\begin{center}
\scalebox{0.9}{\input{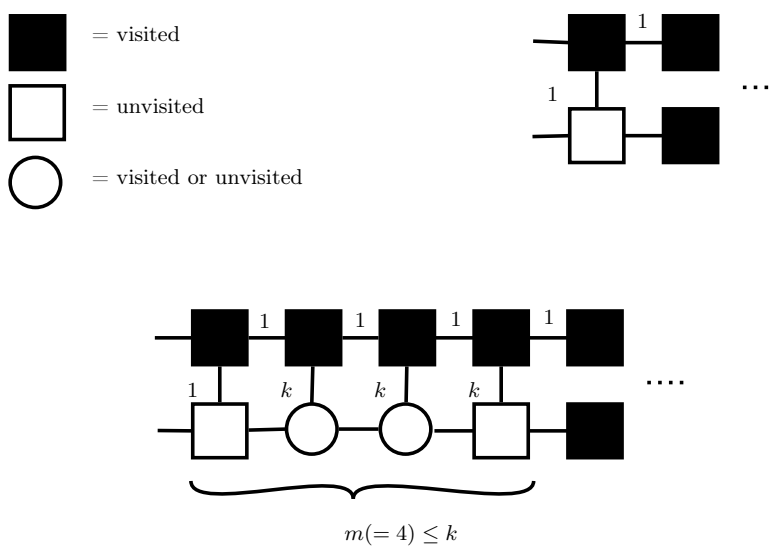}}

\end{center}
\caption{Examples for the case 1, with $m=1$, on the upper picture, and $m=4$, on the lower picture.}
\label{lemma1.1}
\end{figure}

In the second case, we assume that no path as in first case exists. It follows, that the agent does not travel back left, but takes the edge leading to the lower row, no matter the cost of this edge. We have demonstrated this situation in Figure \ref{lemma1.4}. Here, after traveling to the lower row, the agent travels left on a cheapest path to the next unvisited vertex or the one above it. Thus, the agent is optimal in this case. 

In the general case, i.\,e., if the path of the agent from the vertex $(1,1)$ to the vertex $(1,n)$ includes some vertices from the bottom row, we can extend our argumentation as follows: For columns that consist of visited vertices only, the agent knows the cost of all incident edges and can just traverse these columns in an optimal way. For paths of the ladder, where the agent only used the bottom row on its way from the vertex $(1,1)$ to the vertex $(1,n)$, we can use the same argument, just with exchanged roles of the top and bottom row. Note that these parts are always separated by some completely visited columns.

We conclude, that \textsc{alg} is at least 2-competitive on this problem. \qed

\begin{figure}
\begin{center}
\scalebox{0.9}{\input{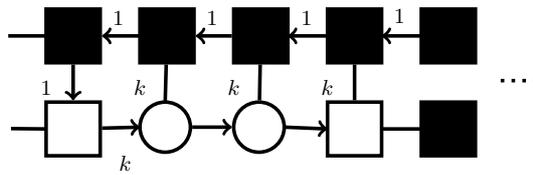}}

\end{center}
\caption{The agent's path on the first case, with $m=4$.}
\label{lemma1.2}
\end{figure}

 \begin{figure}
\begin{center}
\scalebox{0.9}{\input{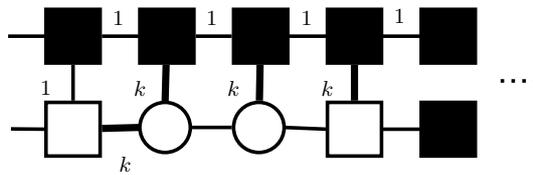}}

\end{center}
\caption{The situation where the edges with a weight $k$ form a ``$k$-separator''.}
\label{lemma1.3}
\end{figure}

\end{proof}

In the following theorem, we prove an upper bound of 4 for the competitive ratio on $2 \times n$ grids.

\begin{theorem} Let $G_{2,n}$ be a known undirected and weighted ladder graph, where weights of the edges are 1 or $k$, where $k$ is an constant. There exists a 4-competitive deterministic online algorithm \textsc{alg} for GEP-UW on $G_{2,n}$.

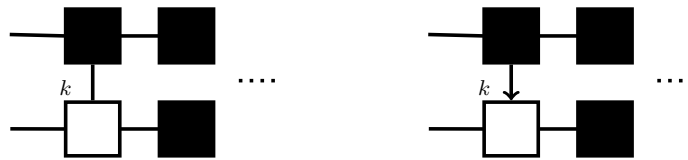
\begin{figure}
\begin{center}
\scalebox{0.9}{% Graphic for TeX using PGF
% Title: D:\ETH2\lemma1.4.dia
% Creator: Dia v0.97.2
% CreationDate: Tue Sep 06 15:21:57 2016
% For: Käyttäjä
% \usepackage{tikz}
% The following commands are not supported in PSTricks at present
% We define them conditionally, so when they are implemented,
% this pgf file will use them.
\ifx\du\undefined
  \newlength{\du}
\fi
\setlength{\du}{15\unitlength}
\begin{tikzpicture}
\pgftransformxscale{1.000000}
\pgftransformyscale{-1.000000}
\definecolor{dialinecolor}{rgb}{0.000000, 0.000000, 0.000000}
\pgfsetstrokecolor{dialinecolor}
\definecolor{dialinecolor}{rgb}{1.000000, 1.000000, 1.000000}
\pgfsetfillcolor{dialinecolor}
\pgfsetlinewidth{0.100000\du}
\pgfsetdash{}{0pt}
\pgfsetdash{}{0pt}
\pgfsetmiterjoin
\definecolor{dialinecolor}{rgb}{0.000000, 0.000000, 0.000000}
\pgfsetfillcolor{dialinecolor}
\fill (8.400000\du,4.000000\du)--(8.400000\du,5.500000\du)--(9.900000\du,5.500000\du)--(9.900000\du,4.000000\du)--cycle;
\definecolor{dialinecolor}{rgb}{0.000000, 0.000000, 0.000000}
\pgfsetstrokecolor{dialinecolor}
\draw (8.400000\du,4.000000\du)--(8.400000\du,5.500000\du)--(9.900000\du,5.500000\du)--(9.900000\du,4.000000\du)--cycle;
\pgfsetlinewidth{0.100000\du}
\pgfsetdash{}{0pt}
\pgfsetdash{}{0pt}
\pgfsetmiterjoin
\definecolor{dialinecolor}{rgb}{0.000000, 0.000000, 0.000000}
\pgfsetfillcolor{dialinecolor}
\fill (11.000000\du,4.000000\du)--(11.000000\du,5.500000\du)--(12.500000\du,5.500000\du)--(12.500000\du,4.000000\du)--cycle;
\definecolor{dialinecolor}{rgb}{0.000000, 0.000000, 0.000000}
\pgfsetstrokecolor{dialinecolor}
\draw (11.000000\du,4.000000\du)--(11.000000\du,5.500000\du)--(12.500000\du,5.500000\du)--(12.500000\du,4.000000\du)--cycle;
\pgfsetlinewidth{0.100000\du}
\pgfsetdash{}{0pt}
\pgfsetdash{}{0pt}
\pgfsetbuttcap
{
\definecolor{dialinecolor}{rgb}{0.000000, 0.000000, 0.000000}
\pgfsetfillcolor{dialinecolor}
% was here!!!
\definecolor{dialinecolor}{rgb}{0.000000, 0.000000, 0.000000}
\pgfsetstrokecolor{dialinecolor}
\draw (9.900000\du,4.750000\du)--(11.000000\du,4.750000\du);
}
\pgfsetlinewidth{0.100000\du}
\pgfsetdash{}{0pt}
\pgfsetdash{}{0pt}
\pgfsetmiterjoin
\definecolor{dialinecolor}{rgb}{1.000000, 1.000000, 1.000000}
\pgfsetfillcolor{dialinecolor}
\fill (8.400000\du,6.600000\du)--(8.400000\du,8.100000\du)--(9.900000\du,8.100000\du)--(9.900000\du,6.600000\du)--cycle;
\definecolor{dialinecolor}{rgb}{0.000000, 0.000000, 0.000000}
\pgfsetstrokecolor{dialinecolor}
\draw (8.400000\du,6.600000\du)--(8.400000\du,8.100000\du)--(9.900000\du,8.100000\du)--(9.900000\du,6.600000\du)--cycle;
\pgfsetlinewidth{0.100000\du}
\pgfsetdash{}{0pt}
\pgfsetdash{}{0pt}
\pgfsetmiterjoin
\definecolor{dialinecolor}{rgb}{0.000000, 0.000000, 0.000000}
\pgfsetfillcolor{dialinecolor}
\fill (11.000000\du,6.600000\du)--(11.000000\du,8.100000\du)--(12.500000\du,8.100000\du)--(12.500000\du,6.600000\du)--cycle;
\definecolor{dialinecolor}{rgb}{0.000000, 0.000000, 0.000000}
\pgfsetstrokecolor{dialinecolor}
\draw (11.000000\du,6.600000\du)--(11.000000\du,8.100000\du)--(12.500000\du,8.100000\du)--(12.500000\du,6.600000\du)--cycle;
\pgfsetlinewidth{0.100000\du}
\pgfsetdash{}{0pt}
\pgfsetdash{}{0pt}
\pgfsetbuttcap
{
\definecolor{dialinecolor}{rgb}{0.000000, 0.000000, 0.000000}
\pgfsetfillcolor{dialinecolor}
% was here!!!
\definecolor{dialinecolor}{rgb}{0.000000, 0.000000, 0.000000}
\pgfsetstrokecolor{dialinecolor}
\draw (9.900000\du,7.350000\du)--(11.000000\du,7.350000\du);
}
% setfont left to latex
\definecolor{dialinecolor}{rgb}{0.000000, 0.000000, 0.000000}
\pgfsetstrokecolor{dialinecolor}
\node[anchor=west] at (9.150000\du,7.350000\du){};
\pgfsetlinewidth{0.100000\du}
\pgfsetdash{}{0pt}
\pgfsetdash{}{0pt}
\pgfsetbuttcap
{
\definecolor{dialinecolor}{rgb}{0.000000, 0.000000, 0.000000}
\pgfsetfillcolor{dialinecolor}
% was here!!!
\definecolor{dialinecolor}{rgb}{0.000000, 0.000000, 0.000000}
\pgfsetstrokecolor{dialinecolor}
\draw (9.150000\du,6.549878\du)--(9.150000\du,5.550122\du);
}
% setfont left to latex
\definecolor{dialinecolor}{rgb}{0.000000, 0.000000, 0.000000}
\pgfsetstrokecolor{dialinecolor}
\node at (8.400000\du,6.200000\du){$k$};
\pgfsetlinewidth{0.100000\du}
\pgfsetdash{{\pgflinewidth}{0.200000\du}}{0cm}
\pgfsetdash{{\pgflinewidth}{0.200000\du}}{0cm}
\pgfsetbuttcap
{
\definecolor{dialinecolor}{rgb}{0.000000, 0.000000, 0.000000}
\pgfsetfillcolor{dialinecolor}
% was here!!!
\definecolor{dialinecolor}{rgb}{0.000000, 0.000000, 0.000000}
\pgfsetstrokecolor{dialinecolor}
\draw (13.200000\du,6.000000\du)--(14.250600\du,6.000000\du);
}
\pgfsetlinewidth{0.100000\du}
\pgfsetdash{}{0pt}
\pgfsetdash{}{0pt}
\pgfsetbuttcap
{
\definecolor{dialinecolor}{rgb}{0.000000, 0.000000, 0.000000}
\pgfsetfillcolor{dialinecolor}
% was here!!!
\definecolor{dialinecolor}{rgb}{0.000000, 0.000000, 0.000000}
\pgfsetstrokecolor{dialinecolor}
\draw (6.850000\du,4.710000\du)--(8.400000\du,4.750000\du);
}
\pgfsetlinewidth{0.100000\du}
\pgfsetdash{}{0pt}
\pgfsetdash{}{0pt}
\pgfsetbuttcap
{
\definecolor{dialinecolor}{rgb}{0.000000, 0.000000, 0.000000}
\pgfsetfillcolor{dialinecolor}
% was here!!!
\definecolor{dialinecolor}{rgb}{0.000000, 0.000000, 0.000000}
\pgfsetstrokecolor{dialinecolor}
\draw (6.866273\du,7.346273\du)--(8.400000\du,7.350000\du);
}
\pgfsetlinewidth{0.100000\du}
\pgfsetdash{}{0pt}
\pgfsetdash{}{0pt}
\pgfsetmiterjoin
\definecolor{dialinecolor}{rgb}{0.000000, 0.000000, 0.000000}
\pgfsetfillcolor{dialinecolor}
\fill (20.000000\du,4.000000\du)--(20.000000\du,5.500000\du)--(21.500000\du,5.500000\du)--(21.500000\du,4.000000\du)--cycle;
\definecolor{dialinecolor}{rgb}{0.000000, 0.000000, 0.000000}
\pgfsetstrokecolor{dialinecolor}
\draw (20.000000\du,4.000000\du)--(20.000000\du,5.500000\du)--(21.500000\du,5.500000\du)--(21.500000\du,4.000000\du)--cycle;
\pgfsetlinewidth{0.100000\du}
\pgfsetdash{}{0pt}
\pgfsetdash{}{0pt}
\pgfsetmiterjoin
\definecolor{dialinecolor}{rgb}{0.000000, 0.000000, 0.000000}
\pgfsetfillcolor{dialinecolor}
\fill (22.600000\du,4.000000\du)--(22.600000\du,5.500000\du)--(24.100000\du,5.500000\du)--(24.100000\du,4.000000\du)--cycle;
\definecolor{dialinecolor}{rgb}{0.000000, 0.000000, 0.000000}
\pgfsetstrokecolor{dialinecolor}
\draw (22.600000\du,4.000000\du)--(22.600000\du,5.500000\du)--(24.100000\du,5.500000\du)--(24.100000\du,4.000000\du)--cycle;
\pgfsetlinewidth{0.100000\du}
\pgfsetdash{}{0pt}
\pgfsetdash{}{0pt}
\pgfsetbuttcap
{
\definecolor{dialinecolor}{rgb}{0.000000, 0.000000, 0.000000}
\pgfsetfillcolor{dialinecolor}
% was here!!!
\definecolor{dialinecolor}{rgb}{0.000000, 0.000000, 0.000000}
\pgfsetstrokecolor{dialinecolor}
\draw (21.500000\du,4.750000\du)--(22.600000\du,4.750000\du);
}
\pgfsetlinewidth{0.100000\du}
\pgfsetdash{}{0pt}
\pgfsetdash{}{0pt}
\pgfsetmiterjoin
\definecolor{dialinecolor}{rgb}{1.000000, 1.000000, 1.000000}
\pgfsetfillcolor{dialinecolor}
\fill (20.000000\du,6.600000\du)--(20.000000\du,8.100000\du)--(21.500000\du,8.100000\du)--(21.500000\du,6.600000\du)--cycle;
\definecolor{dialinecolor}{rgb}{0.000000, 0.000000, 0.000000}
\pgfsetstrokecolor{dialinecolor}
\draw (20.000000\du,6.600000\du)--(20.000000\du,8.100000\du)--(21.500000\du,8.100000\du)--(21.500000\du,6.600000\du)--cycle;
\pgfsetlinewidth{0.100000\du}
\pgfsetdash{}{0pt}
\pgfsetdash{}{0pt}
\pgfsetmiterjoin
\definecolor{dialinecolor}{rgb}{0.000000, 0.000000, 0.000000}
\pgfsetfillcolor{dialinecolor}
\fill (22.600000\du,6.600000\du)--(22.600000\du,8.100000\du)--(24.100000\du,8.100000\du)--(24.100000\du,6.600000\du)--cycle;
\definecolor{dialinecolor}{rgb}{0.000000, 0.000000, 0.000000}
\pgfsetstrokecolor{dialinecolor}
\draw (22.600000\du,6.600000\du)--(22.600000\du,8.100000\du)--(24.100000\du,8.100000\du)--(24.100000\du,6.600000\du)--cycle;
\pgfsetlinewidth{0.100000\du}
\pgfsetdash{}{0pt}
\pgfsetdash{}{0pt}
\pgfsetbuttcap
{
\definecolor{dialinecolor}{rgb}{0.000000, 0.000000, 0.000000}
\pgfsetfillcolor{dialinecolor}
% was here!!!
\definecolor{dialinecolor}{rgb}{0.000000, 0.000000, 0.000000}
\pgfsetstrokecolor{dialinecolor}
\draw (21.500000\du,7.350000\du)--(22.600000\du,7.350000\du);
}
% setfont left to latex
\definecolor{dialinecolor}{rgb}{0.000000, 0.000000, 0.000000}
\pgfsetstrokecolor{dialinecolor}
\node[anchor=west] at (20.750000\du,7.350000\du){};
\pgfsetlinewidth{0.100000\du}
\pgfsetdash{}{0pt}
\pgfsetdash{}{0pt}
\pgfsetbuttcap
{
\definecolor{dialinecolor}{rgb}{0.000000, 0.000000, 0.000000}
\pgfsetfillcolor{dialinecolor}
% was here!!!
\pgfsetarrowsstart{to}
\definecolor{dialinecolor}{rgb}{0.000000, 0.000000, 0.000000}
\pgfsetstrokecolor{dialinecolor}
\draw (20.750000\du,6.549878\du)--(20.750000\du,5.550122\du);
}
% setfont left to latex
\definecolor{dialinecolor}{rgb}{0.000000, 0.000000, 0.000000}
\pgfsetstrokecolor{dialinecolor}
\node at (20.000000\du,6.200000\du){$k$};
\pgfsetlinewidth{0.100000\du}
\pgfsetdash{{\pgflinewidth}{0.200000\du}}{0cm}
\pgfsetdash{{\pgflinewidth}{0.200000\du}}{0cm}
\pgfsetbuttcap
{
\definecolor{dialinecolor}{rgb}{0.000000, 0.000000, 0.000000}
\pgfsetfillcolor{dialinecolor}
% was here!!!
\definecolor{dialinecolor}{rgb}{0.000000, 0.000000, 0.000000}
\pgfsetstrokecolor{dialinecolor}
\draw (24.800000\du,6.000000\du)--(25.850600\du,6.000000\du);
}
\pgfsetlinewidth{0.100000\du}
\pgfsetdash{}{0pt}
\pgfsetdash{}{0pt}
\pgfsetbuttcap
{
\definecolor{dialinecolor}{rgb}{0.000000, 0.000000, 0.000000}
\pgfsetfillcolor{dialinecolor}
% was here!!!
\definecolor{dialinecolor}{rgb}{0.000000, 0.000000, 0.000000}
\pgfsetstrokecolor{dialinecolor}
\draw (18.450000\du,4.710000\du)--(20.000000\du,4.750000\du);
}
\pgfsetlinewidth{0.100000\du}
\pgfsetdash{}{0pt}
\pgfsetdash{}{0pt}
\pgfsetbuttcap
{
\definecolor{dialinecolor}{rgb}{0.000000, 0.000000, 0.000000}
\pgfsetfillcolor{dialinecolor}
% was here!!!
\definecolor{dialinecolor}{rgb}{0.000000, 0.000000, 0.000000}
\pgfsetstrokecolor{dialinecolor}
\draw (18.466273\du,7.346273\du)--(20.000000\du,7.350000\du);
}
\end{tikzpicture}}

\end{center}
\caption{Case 2, in the left picture, and the agent's path for it, in the right picture.}
\label{lemma1.4}
\end{figure}

\end{theorem}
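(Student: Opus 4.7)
The plan is to design a two-phase algorithm $\textsc{alg}$. In Phase~1 the agent travels from $s_0=(1,1)$ to $(1,n)$, and in Phase~2 it invokes the algorithm guaranteed by Lemma~1 to visit the remaining vertices. Since Lemma~1 already bounds the Phase-2 cost by (asymptotically) $2\cdot\textsc{cost}_{\textsc{opt}}$, it suffices to exhibit a Phase-1 subroutine whose cost is also at most $2\cdot\textsc{cost}_{\textsc{opt}}$, giving an overall competitive ratio of $4$.

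For Phase~1 I would use a ``greedy with local detour'' strategy. The agent proceeds rightwards column by column, staying on row~$1$ whenever the horizontal edge $\{(1,j),(1,j+1)\}$ has weight $1$. When this edge has weight $k$, the agent first inspects the vertical edge $\{(1,j),(2,j)\}$; if that is weight~$1$ it descends to $(2,j)$ and tries to circumvent the weight-$k$ bridge through the lower row using only weight-$1$ edges, returning to row~$1$ as soon as a weight-$1$ vertical edge is encountered. If every locally reachable detour costs at least $k$, the agent finally crosses via a weight-$k$ horizontal edge on its current row. Each choice depends only on weights already revealed, so the algorithm is legitimately online.

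For the analysis I would bound the Phase-1 cost column by column, using two lower bounds on the optimum: first, $\textsc{cost}_{\textsc{opt}}\geq \sum_{j=1}^{n-1}\min(c(\{(1,j),(1,j+1)\}),c(\{(2,j),(2,j+1)\}))$, because the optimal tour must cross every column boundary at least once; and second, $\textsc{cost}_{\textsc{opt}}\geq 2n-1$, since all $2n$ vertices are visited by edges of weight at least $1$. For each column boundary where the algorithm pays $1$ (direct crossing) or $3$ (weight-$1$ detour), the cost is absorbed by the first bound together with the vertical edges, which the optimum must also use or else pay $k$ nearby. For each boundary where the algorithm is forced onto a weight-$k$ edge, I would argue via a local ``$k$-separator'' argument—completely analogous to the one used in the proof of Lemma~1—that the optimum must also pay $k$ somewhere in the same window of columns. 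Summing the local bounds yields a Phase-1 cost of at most $2\cdot\textsc{cost}_{\textsc{opt}}+O(1)$, and combined with Phase~2 the total is at most $4\cdot\textsc{cost}_{\textsc{opt}}+O(1)$.

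The main obstacle will be the careful formalisation of the charging scheme in the ``forced weight-$k$'' case: the agent may commit to a detour before knowing it is a dead end (for instance descending to $(2,j)$ in the hope of a cheap horizontal edge and discovering $c(\{(2,j),(2,j+1)\})=k$ only upon arrival), so it may incur an extra backtracking cost of up to $2$ per column and possibly interact badly with the starting position of Phase~2. A case analysis on the pattern of revealed weights over a short sliding window of columns, together with the $k$-separator observation, should handle these situations; the additive $O(1)$ terms coming from boundary effects and from Lemma~1's asymptotic bound are then folded into the additive constant $\alpha$ permitted by the definition of the competitive ratio.
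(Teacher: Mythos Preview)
Your two-phase decomposition (Phase 1: reach $(1,n)$; Phase 2: invoke Lemma~1) is a clean idea, but the specific Phase-1 routine you describe is not $2$-competitive, and the gap is exactly in the step you flag as ``the main obstacle''. Your rule only inspects the vertical edge at the \emph{current} column: if $\{(1,j),(1,j+1)\}$ and $\{(1,j),(2,j)\}$ are both weight~$k$, you cross on row~1 and pay~$k$. But this need not correspond to any $k$-separator, so the charging argument ``the optimum must also pay $k$ somewhere in the same window'' fails.

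Concretely, take row~1 horizontal weights $1,k,1,k,\ldots$, row~2 horizontals all~$1$, and vertical weights $1,k,1,k,\ldots$ (weight~$k$ at every even column). There is no $k$-separator anywhere: the weight-$1$ subgraph is connected and $\textsc{opt}$ uses no $k$-edge at all, with $\textsc{cost}_{\textsc{opt}}\approx \tfrac{5n}{2}$. Yet at every even column $j$ your agent sees $k$ to the right and $k$ below, refuses to descend, and crosses on row~1 paying~$k$. Your Phase-1 cost is then $\approx \tfrac{n}{2}(k+1)$, so $\text{Phase 1}/\textsc{opt}\to (k+1)/5$, which exceeds~$2$ once $k\ge 10$ and makes the overall ratio $(k+11)/5$, not~$4$.

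The paper's algorithm avoids this by being less local: it keeps moving rightward \emph{on any available weight-$1$ edge}, which in the example above means backtracking from $(1,2)$ to $(1,1)$, descending, and continuing along row~2. More importantly, the paper does not split into a global Phase~1/Phase~2. Instead it processes the ladder \emph{component by component}, where components are the maximal pieces between genuine $k$-separators: within a component it walks right to the separator, invokes Lemma~1 to sweep back and finish that component, walks right again, and only then crosses one $k$-edge of the separator. The per-component accounting ``forward $+$ $2$-competitive sweep $+$ forward $=4$'' is what yields the ratio~$4$ independently of~$k$. To repair your approach you would have to allow arbitrarily long backtracking in Phase~1 to locate a cheap vertical, at which point the analysis essentially collapses into the paper's component-wise scheme.
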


\begin{proof}

We have a graph $G_{2,n}$, and we consider the following deterministic algorithm \textsc{alg}. \textsc{alg} works for GEP-UW on $G_{2,n}$. As we are considering ladders, we can employ the strategy where the agent always goes from left to right as long as it is possible to use only the edges with a weight 1. If the algorithm is forced to travel an edge with a weight $k$, it turns back to visit all the possible vertices it has not explored so far. From the proof of Lemma 1, we know that, for any subgraph of $G_{2,n}$, it is possible to visit the unvisited vertices from the right end of the subgraph to the starting vertex of the subgraph with a cost that is at most twice the cost of the optimal traversal of this subgraph. We consider two different situations, whether the optimal path does or does not include any edges with weight $k$.

If the optimal path does not include any edge with a weight $k$, we can be sure that neither does the one traveled by the agent. This follows because we know that there exists a path that consist of only edges with weight 1, and we know that the agent avoids the edges with weight $k$ if possible. The agent realizes whether it missed some vertices on its way from left to right. Then, when returning, it can visit those vertices. It follows that the agent uses some path to reach the righthand end of the grid and then, on its way back, follows a path as described in Lemma 1. As the agent's first path must naturally be shorter than the optimal path, we can conclude that $\textsc{comp}_{\textsc{alg}}\leqslant 2$ in this case.

We now consider the other case, i.\,e., that the optimal solution uses an expensive edge, and we divide it into the following cases.

\begin{figure}
\begin{center}
\scalebox{0.6}{\input{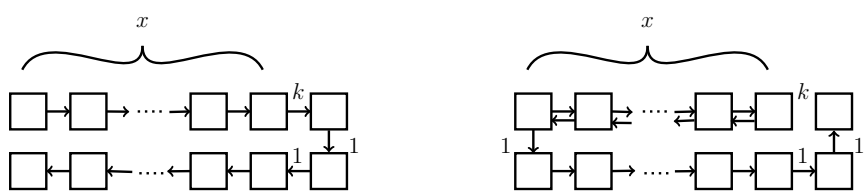}}

\end{center}
\caption{The optimal path and the agent's path in case 1.}
\label{case3}
\end{figure}

\textbf{Case 1.} Suppose there is a dead-end in the graph, from which the only exits are either a long way back or a $k$-edge, such that the optimal path traverses this $k$-edge. However, assume that $\textsc{alg}$ goes back. Let us assume that this long way back has cost $x$. It follows that the optimal solution pays at least $x+k+2+x$ for visiting all vertices in these $x$ columns of the grid and leaving via the $k$-edge, and the agent pays at most $3x+3$ for traversing the columns and leaving to the righthand side. We have presented an example of the optimal path and the agent's path in Figure \ref{case3}.

\begin{figure}
\begin{center}
\scalebox{0.6}{\input{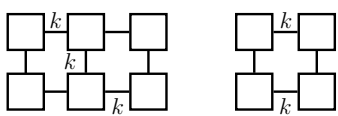}}

\end{center}
\caption{Example forms of the $k$-separators.}
\label{situations}
\end{figure}

\textbf{Case 2.} Suppose, that the grid is separated by $k$-edges. Then, we have one of the following two situations. In the first type of separator, edges between $(i,j)$ and $(i+1,j)$, $(i',j')$ and $(i'+1,j')$, where $i'>i$ and $j\neq j'$, and for all $\ell \in \{i+1,\ldots,i'\}$ edges between $(\ell,1)$ and $ (\ell,2)$ are $k$-edges. In the second type of separator, edges between $(i,j)$ and $(i+1,j)$, $(i,j')$ and $(i+1,j')$, where $j+j'=3$ are $k$-edges. Examples of both cases are demonstrated in Figure \ref{situations}. 

In these situations, the agent proceeds as follows. The agent first identifies the $k$-edge-separator. After that, the agent goes back and explores the missing vertices. By Lemma 1, we know that this is possible. Next, the agent goes to the $k$-edge-separator via the cheapest way, and then traverses one of the $k$-edges. At last, the agent moves to the right and starts with the first step. The agent's way of proceeding will not change, even if there are more than one $k$-separators in a row. In that case, it will find an almost optimal path through all the $k$-edges, as it always identifies the $k$-edge-separator before traversing one edge of it.

We can estimate that the cost of $p_A$ is at most four times the cost of $p_O$ by separately counting each component that does not contain a $k$-edge-separator. That is, a way to the separator, a 2-competitive way back to explore everything, and again the way to the separator. It is important to notice that each expensive $k$-separator is only used once. Thus, $\textsc{comp}_{\textsc{alg}}\leqslant 4$. \qed

\end{proof}

\section{Directed Graphs}

In this section, we consider the case of directed graphs. Here, we consider the natural greedy algorithm \textsc{gr}, which always makes a best possible decision at the moment, that is, it chooses the shortest way to some new, unvisited vertex. In the following example, we present directed weighted graphs $G_{2,n}$ and $G_{n,n}$, with weights $1$ and $k$, such that $\textsc{comp}_{\textsc{GR}}(G)\rightarrow  \frac{n}{a},$ where $G \in \{G_{2,n},G_{n,n}\}$, $a$ is a constant, and $n$ tends to infinity.

\begin{example}

Consider a directed grid $G_{2,n}$ as in Figure \ref{katos}. We claim that the algorithm \textsc{gr} is a very bad strategy for GEP-DW. 

\begin{figure}
\begin{center}
\scalebox{0.83}{\input{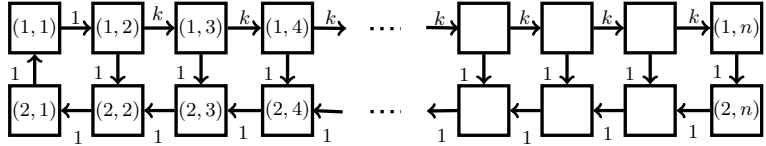}}
\end{center}

\caption{The directed grid $G_{2,n}$, for which the \textsc{gr} is a very bad strategy.}
\label{katos}
\end{figure}

We prove the claim by simply explaining the worst-case scenario of the algorithm's execution. That is, \textsc{gr} makes $n-2$ loops from the starting point and in the end travels once the upper row from left to right and finally from vertex $(1,n)$ to the vertex $(2,n)$. However, the optimal path simply goes around the graph. Here, the competitive ratio depends only the amount of traveled $k$-edges. The agent travels some $k$-edges on each of the loops after the first loop, and while traveling once the upper row from left to right. The total amount of $k$-edges the agent travels is $\sum_{i}^{n-2} i = \frac{(n-2)(n-1)}{2} = \frac{n^2-3n+2}{2}$. As the optimal path is just a circle, the amount of the $k$-edges is $n-2$. It follows that the competitive ratio is $$\textsc{comp}_{\textsc{gr}}(G_{2,n})=\frac{n^2-3n+2}{2(n-2)},$$ hence,   $$\textsc{comp}_{\textsc{gr}}(G_{2,n}) \xrightarrow [n \rightarrow \infty]{} \frac{n}{2}.$$

Next, we enlarge our graph $G_{2,n}$, with slight modifications, to the
$n\times n $ grid, that is presented with $n=5$ in Figure \ref{vika1}. We
prove that \textsc{gr} is again a very bad strategy for GEP-DW. Similarly
as with the graph $G_{2,n}$, the agent makes a lot of loops whereas the
optimal path simply goes around. In Figure \ref{vika2}, we have presented
$p_A$ and $p_O$ for the graph $G_{5,5}$.
Again, it is clear that the competitive ratio depends only the amount of
traveled $k$-edges. It is easy to see that the optimal path travels twice
each of the horizontal $k$-edges, and once each of the vertical $k$-edges.
Hence, the total cost of traveled $k$-edges for the optimal solution, for
$G_{5,5}$, is $2(2(n-3))+2=4n-10$. The agent travels some $k$-edges on
$n-2$ loops, and while traveling once the upper row from left to right, on
the first two rows, and on $n-1$ loops, on the rows three and four, and
once each of the $k$-edge on the columns. It follows that, the total cost
of traveled $k$-edges for the agent's solution, for $G_{5,5}$, is
$\sum_i^{n-2}i + \sum_j^{n-1}j +2=\frac{2n^2-4n+6}{2}$. Thus,
$$\textsc{comp}_{\textsc{gr}}(G_{5,5})=\frac{2n^2-4n+6}{2(4n-10)}.$$

Let us now consider any grid $G_{n,n}$, built exactly as the grid $G_{5,5}$. With any $n$, $p_A$ and $p_0$ behave in the first 5 rows exactly as they did with the graph $G_{5,5}$. It follows that, $$\textsc{comp}_{\textsc{gr}}(G_{n,n})=\frac{(\frac{n}{5})(2n^2-4n+6)}{(\frac{n}{5})(2(4n-10))},$$ hence 
 $$\textsc{comp}_{\textsc{gr}}(G_{n,n}) \xrightarrow [n \rightarrow \infty]{} \frac{n}{4}.$$

From the example above, we can conclude that the greedy algorithm for directed grids $G_{2,n}$ and $G_{n,n}$ is a very bad strategy for GEP-DW. It follows that we must consider some other online algorithm to get a better than linear limit value of the competitive ratio $n$.

\begin{figure}
\begin{center}
\scalebox{0.7}{\input{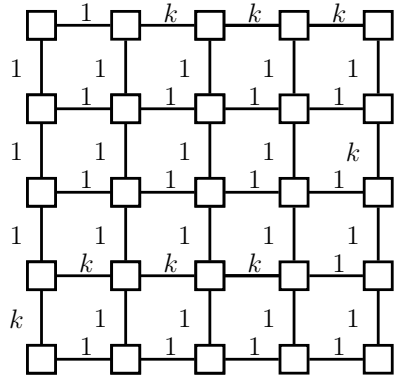}}
\end{center}

\caption{The grid $G_{5,5}$.}
\label{vika1}
\end{figure}

\begin{figure}
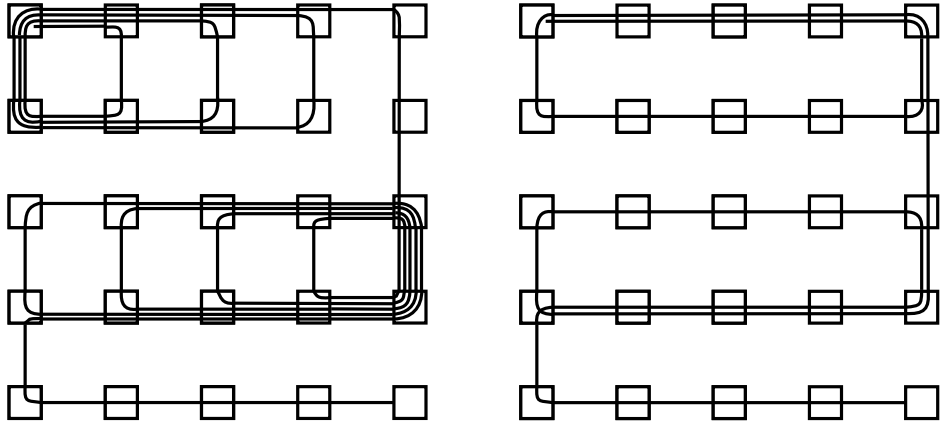

\begin{center}
\scalebox{0.8}{\input{vika2.tex}}\qquad
\scalebox{0.8}{\input{vika3.tex}}
\end{center}

\caption{The agent's path (left) and the optimal path (right) through the graph $G_{5,5}$.}
\label{vika2}
\end{figure}

% \begin{figure}
% \begin{center}
% \scalebox{0.8}{\input{vika3.tex}}
% \end{center}

% \caption{The optimal path through the graph $G_{5,5}$.}
% \label{vika3}
% \end{figure}

\end{example}

\newpage
\addcontentsline{toc}{section}{References}    %sisällysluetteloon

\end{document}